%% file: templateArxiv.tex
\documentclass{article}

\usepackage{PRIMEarxiv}

\usepackage[utf8]{inputenc} 
\usepackage[T1]{fontenc}    
\usepackage{hyperref}       
\usepackage{url}            
\usepackage{booktabs}       
\usepackage{amsfonts}       
\usepackage{nicefrac}       
\usepackage{microtype}      
\usepackage{lipsum}
\usepackage{fancyhdr}       
\usepackage{graphicx}       
\graphicspath{{media/}}     
\usepackage{cite}
\usepackage{amsmath,amssymb,amsfonts}
\usepackage{algorithmic}
\usepackage{graphicx}
\usepackage{textcomp}
\usepackage{xcolor}
\usepackage{comment}
\usepackage{subfiles}
\usepackage{enumitem}
\usepackage{color, colortbl}
\usepackage{tikz}
\usepackage{caption}
\usepackage{subcaption}
\usepackage{multirow}
\usepackage{url}            
\usepackage{booktabs}       
\usepackage{nicefrac}       
\usepackage{amsthm}
\usepackage{makecell}

\usepackage{float}
\usepackage{algorithm}
\usepackage{algorithmic}
\usepackage{amsmath}
\usepackage{amsthm}
\newtheorem{theorem}{Theorem}
\newtheorem{lemma}{Lemma}

\title{LayerCheck: Adaptive Layer-wise Checkpointing for Large Language Model Post-training
}

\author{
Minqiu Sun \\
University of Delaware \\
Newark, United States \\
\texttt{mqsun@udel.edu}
\And
Xin Huang \\
RIKEN Center for Computational Science \\
Kobe, Japan \\
\texttt{xin.huang@riken.jp}
\And
Luanzheng Guo \\
Pacific Northwest National Laboratory \\
Richland, United States \\
\texttt{lenny.guo@pnnl.gov}
\AND
Nathan R. Tallent \\
Pacific Northwest National Laboratory \\
Richland, United States \\
\texttt{Nathan.Tallent@pnnl.gov}
\And
Kento Sato \\
RIKEN Center for Computational Science \\
Kobe, Japan \\
\texttt{kento.sato@riken.jp}
\And
Dong Dai \\
University of Delaware \\
Newark, United States \\
\texttt{dai@udel.edu}
}

\begin{document}
\maketitle

\begin{abstract}
With the rising computational and monetary costs of training large language models (LLMs), checkpointing---periodically storing model states for recovery---becomes essential for fault tolerance. Conventional checkpointing entails a severe trade-off between checkpoint frequency (I/O overhead) and computational recovery (recovery time). State-of-the-art approaches mitigate this cost through pipelining checkpoint I/Os, differential checkpointing, or in-memory persistence, yet none leverage the distinct characteristics of LLM training dynamics, where model weight updates are non-uniformly distributed across transformer layers. This observation implies that saving all weights each time might not be efficient. Inspired by this observation, we present \texttt{LayerCheck}, a layer-wise adaptive checkpointing framework that selectively persists layers whose updates exceed a threshold. This design avoids periodic I/O bursts by distributing layer-wise checkpoint writes over time, resulting in smoother and more balanced I/O profiles. Upon recovery, \texttt{LayerCheck} reconstructs a mixed-timestamp composite model state by aggregating the most recently persisted versions of each layer together with their matching optimizer states. Under a bounded per-layer staleness guard, this introduces a controlled perturbation: under standard Adam assumptions it adds a bounded staleness term, and empirically the post-restart loss deviates from the failure-free trajectory by at most \textbf{0.54\%}. Empirical results on multiple open-source LLMs with different datasets further demonstrate that recovered models preserve the original convergence behavior and accuracy while substantially reducing checkpoint overheads. Specifically, \texttt{LayerCheck} achieves up to 22.6× reduction in total checkpoint size and 1.31× reduction in end-to-end training time compared to state-of-the-art systems, significantly lowering the cost of checkpointing.
\end{abstract}

\keywords{Checkpointing, Large language models, Fault-tolerant computing.}

\input{./introduction}

\input{./background}

\input{./methodology}

\input{./evalv2}

\input{./related_work}

\input{./conclusion}

\section*{Acknowledgments}
We sincerely thank the reviewers for their valuable feedback. This work was supported in part by the National Science Foundation (NSF) under grants CCF-2521613 and CCF-2412345. 
This research is in part supported by the U.S.\@ Department of Energy (DOE) through the Office of Advanced Scientific Computing Research's ``Orchestration for Distributed \& Data-Intensive Scientific Exploration'' (77765) 
and Pacific Northwest National Laboratory's AT SCALE LDRD ``Decentralized data mesh for autonomous materials synthesis''.
PNNL 
is operated by Battelle for the DOE under Contract DE-AC05-76RL01830.
The Authors acknowledge the National Artificial Intelligence Research Resource (NAIRR) Pilot for contributing to this research result. This work was partially supported by also the RIKEN TRIP Initiative (AGIS / Foundational Software Development) and JSPS KAKENHI Grant Number JP24K14974.

\bibliographystyle{unsrt}  
\bibliography{sample-base}

\end{document}

%% file: introduction.tex
\section{Introduction}

Large language models (LLMs) have undergone rapid and unprecedented advancements in recent years. Training these models, including both pre-training and post-training, has become increasingly time-consuming. It's well known that pre-training, which establishes the model's foundational capabilities, can require thousands of GPUs running continuously for weeks or months~\cite{llama3}. Post-training, which is the main focus of this study, still demands hundreds of GPUs over days or weeks~\cite{instruction-finetune,powermorph}. Such prolonged runtimes render failures unavoidable. Gandhi et al.~\cite{gandhi2025moetionefficientreliablesparse} report that failures can occur as frequently as every 45 minutes in large-scale training.

Checkpointing, originally developed in HPC to protect long-running scientific applications~\cite{moody2010design, hargrove2006berkeley, fu2024autocheck, xu2024efficient, fu2024benchmarking}, is the standard fault-tolerance mechanism for large-scale LLM training~\cite{checkpointhpcapp,checkpointingforhpc}. Mainstream frameworks such as DeepSpeed~\cite{deepspeed} and FSDP~\cite{FSDP} typically use fixed-interval checkpointing, periodically saving the full training state, including model weights, optimizer states, and RNG/scheduler states, to persistent storage for failure recovery~\cite{gandhi2025moetionefficientreliablesparse}.

However, fixed-interval checkpointing introduces substantial overhead in I/O, storage, and, most importantly, end-to-end training time~\cite{guptaJustInTimeCheckpointingLow2024}. It fundamentally trades checkpoint overhead for recovery cost: infrequent checkpoints reduce runtime overhead but increase recomputation after failures, whereas frequent checkpoints shorten rollback distance but incur higher I/O and synchronization cost. This trade-off worsens at scale as job-level MTBF decreases, forcing shorter checkpoint intervals. Prior work reports that checkpointing can consume up to 12\% of total training time, and as much as 43\% in extreme cases~\cite{maengCPRUnderstandingImproving}.

Recent work on LLM checkpointing overhead falls into three groups.
(1) \textbf{Pipeline optimization.} GEMINI~\cite{wangGEMINIFastFailure2023} uses in-memory checkpointing to overlap GPU–storage transfers with training, enabling high-frequency snapshots. DataState-LLM~\cite{Datastate-LLM} further decouples checkpointing from computation via lazy synchronization. These approaches, however, struggle as model sizes grow and the GPU–storage gap widens.
(2) \textbf{Offline compression.} Delta-DNN~\cite{DeltaDNN} and ExCP~\cite{liexcp} apply lossy differential compression across consecutive checkpoints to cut storage cost. However, full checkpoints must still be materialized before compression, leaving I/O volume unchanged.
(3) \textbf{Selective persistence.} Online differential checkpointing persists only changes: Check-N-Run~\cite{check-n-run} targets recommendation models, while LowDiff~\cite{lowdiff} extends the idea to LLMs by saving only per-iteration gradients, at the cost of recomputation and lineage tracking. Amber~\cite{Amber} further explores selective incremental checkpointing for LLM training by using a bitmap to identify changed parameters and persist only those parameters together with their optimizer state. While this can reduce written bytes, it also requires fine-grained per-parameter tracking and bitmap maintenance, whose metadata and runtime overhead grows with model size. All of these still stall training and inflate recovery cost.

Existing selective/differential persistence schemes typically decide what to write based on a global rule (e.g., fixed frequency, fixed partitions, or explicit deltas). In contrast, we exploit an empirical property that is repeatedly observed in LLM post-training: \emph{different transformer layers evolve at different paces across steps}~\cite{temprature}. This layer-wise update asynchrony suggests that writing the \emph{entire} model at every checkpoint is often redundant: at many steps, only a subset of layers experiences meaningful drift.

Motivated by this observation, we propose \texttt{LayerCheck}, a \emph{layer-wise adaptive checkpointing} mechanism. \texttt{LayerCheck} tracks the update significance of each layer during training and persists only those layers whose accumulated drift is sufficiently large. By writing partial model states in most checkpoints, \texttt{LayerCheck} substantially reduces checkpoint I/O volume while keeping recovery practical: reconstruction simply merges the most recently persisted state of each layer into a composite checkpoint.

Selective persistence introduces an inherent trade-off at recovery time: layers not written in the most recent checkpoint are restored from earlier iterations, yielding a \emph{mixed-timestamp} model state. Following the perturbation-based view of partial recovery~\cite{partialrecover}, we interpret this mismatch as a bounded perturbation from the ideal failure-free trajectory. Let $t_i$ denote the last iteration at which layer $i$ is persisted, and let $\tau$ be the failure iteration. We define the \emph{layer age} (staleness) of layer $i$ at recovery as $\tau - t_i$. To prevent any layer from becoming arbitrarily stale, \texttt{LayerCheck} enforces a staleness guard with a user-specified budget $S_{\max}$: if $\tau - t_i \ge S_{\max}$, the system forces persistence of layer $i$ even when its drift is below the update threshold. Under this bounded-staleness invariant, we show that recovery does not change the asymptotic convergence rate (\S\ref{sec:Method}), and we further validate empirically that post-recovery training closely matches full-checkpoint baselines (\S\ref{sec:evaluation}).

\texttt{LayerCheck} is guided by three practical design principles. \emph{First,} it makes persistence decisions based on \emph{accumulated} layer drift over multiple iterations, so that gradual but meaningful changes are not missed. \emph{Second,} instead of tracking per-parameter updates, it maintains a lightweight \emph{per-layer} update statistic computed from already-available training signals, reducing the tracking cost to a small vector of layer-level summaries. \emph{Third,} it uses a threshold $K$ tied to the magnitude of training updates, optionally in a learning-rate-aware manner, while the staleness guard $S_{\max}$ provides an explicit safeguard independent of threshold tuning.

We evaluate \texttt{LayerCheck} on multiple open-source LLMs and measure checkpoint I/O volume, storage footprint, end-to-end training time, and recovery time under failure injection. Compared to state-of-the-art checkpointing systems, \texttt{LayerCheck} reduces the \emph{total} checkpoint size by up to \textbf{22.6×} and improves total training time by up to \textbf{1.31×} without measurable degradation in model quality in our evaluated settings. Recovery is also faster: by reconstructing the composite state from the most recently persisted layers, \texttt{LayerCheck} reduces recovery cost by up to \textbf{3.4×}. In addition, we further study projected behavior under more failure-prone scale-out regimes using a simulation parameterized by measured system components. These results indicate that layer-wise selective persistence can deliver substantial system-level savings without sacrificing training outcomes. Our contributions are threefold:
\begin{itemize}[leftmargin=1.5em,nosep]
  \item \textbf{Layer-wise selectivity from training dynamics.}
  We identify \emph{layer-wise update asynchrony} as an exploitable property of LLM post-training and propose \texttt{LayerCheck}, a layer-wise selective checkpointing mechanism that persists full layer states only when their accumulated updates are significant, with negligible overhead.

  \item \textbf{Recovery fidelity under mixed-timestamp checkpoints.}
  We provide theoretical and empirical evidence that LayerCheck's mixed-timestamp checkpoints preserve training fidelity: post-restart loss deviates from the failure-free trajectory by at most \textbf{0.54\%}, and downstream benchmark scores match failure-free fine-tuning.

  \item \textbf{Fast, practical recovery without replay.}
  We design a per-layer checkpointing and reconstruction workflow that restores the composite, full checkpoint effectively, avoiding the retraining and replay overheads common in fixed-frequency checkpointing.

\end{itemize}

The rest of the paper is organized as follows. \S\ref{sec:background} introduces the necessary background. In \S\ref{sec:Method} we describe the design and implementation of \texttt{LayerCheck}. We present the extensive experimental results in \S\ref{sec:evaluation}. \S\ref{sec:related} discusses the closely related work. \S\ref{sec:conclude} concludes this study and discusses future work.

The source code of \texttt{LayerCheck} is publicly available at \url{https://github.com/DIR-LAB/LayerCheck}.

%% file: background.tex
\section{Background}
\label{sec:background}
Large language models (LLMs) have demonstrated remarkable power in various domains~\cite{naveed2025comprehensive,egersdoerfer2025stellar,egersdoerfer2025ioagent,egersdoerfer2024ion,pfagenttractableselfevolvingpowerflow}. 
LLM training, like other neural networks, involves a forward pass, loss computation, a backward pass, and parameter update. Here, we review the key fundamentals that motivate our layer-wise checkpointing: (i) the structural components of LLMs, (ii) the unbalanced updating characteristics intrinsic to LLM training.

\subsection{Problem Setup and Goals}
\label{sec:problem_setup}
We consider large language model (LLM) post-training under a fail-stop failure model, where the job may terminate unexpectedly and must be restored from persisted checkpoints.
A checkpoint includes the model weights, any persistent optimizer state associated with them, when present (e.g., Adam/AdamW momentum), and some other configuration files of the model's architecture, training arguments, and random number generator, which are necessary to resume training.
Unlike traditional checkpointing that snapshots a globally consistent state at a single iteration, \texttt{LayerCheck} allows \emph{mixed-timestamp} restoration, where each layer may be restored from its most recent available checkpoint.
Our goal is to reduce end-to-end training time under failures by reducing checkpoint I/O overhead while preserving post-recovery training behavior and final model quality.

\subsection{Layer Structure of LLMs}
In Figure~\ref{fig:llama_workflow}, we show the layer structure of Qwen2.5-7B model~\cite{qwen2025qwen25technicalreport}. Here, we partition trainable parameters into N layer groups at the granularity of modules: \texttt{embed\_tokens}, each $\texttt{transformer block}_i$ ($i=1,\ldots,B$), \texttt{normalization}, and \texttt{lm\_head}, yielding $N=B+3$. Tokens are first embedded to a vector, then propagated through 28 consecutive transformer layers. Then, after layer normalization to eliminate internal covariate shift, the hidden representations are projected back to the vocabulary space to produce logits in the \texttt{lm\_head} layer; finally, a softmax yields output probabilities. Note that in smaller models, the \texttt{lm\_head} layer may be weight-tied to \texttt{embed\_tokens} to reduce parameter count \cite{weight_tying}.

\begin{figure}[ht!]
    \centering
\includegraphics[width=0.8\linewidth]{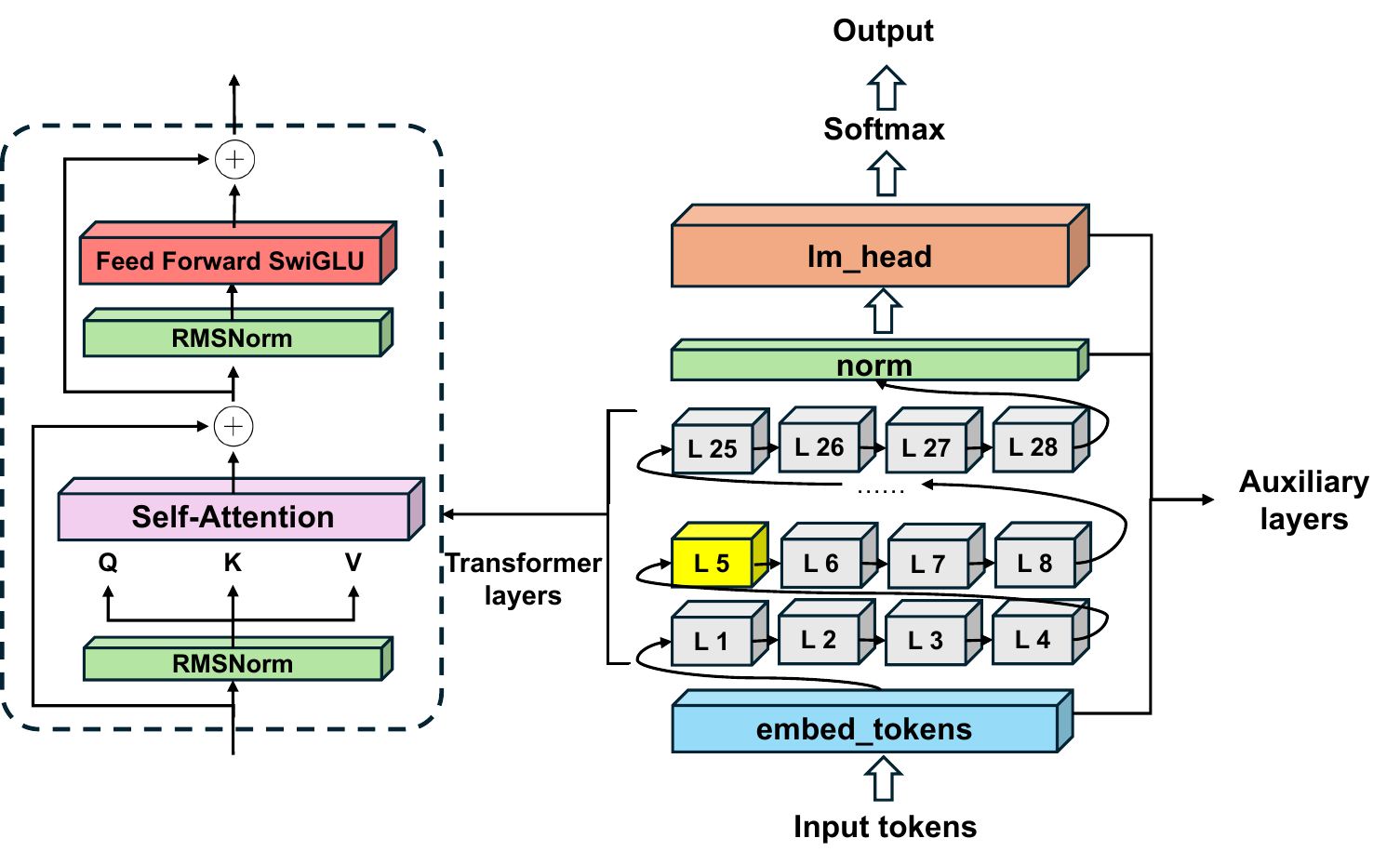}
    \caption{The layer-wise structure in a Qwen2.5-7B model.}
    \label{fig:llama_workflow}
\end{figure}


\subsection{Layer-wise Unbalanced Updating}

\begin{figure}[ht!]
    \centering
    \includegraphics[width=0.8\linewidth]{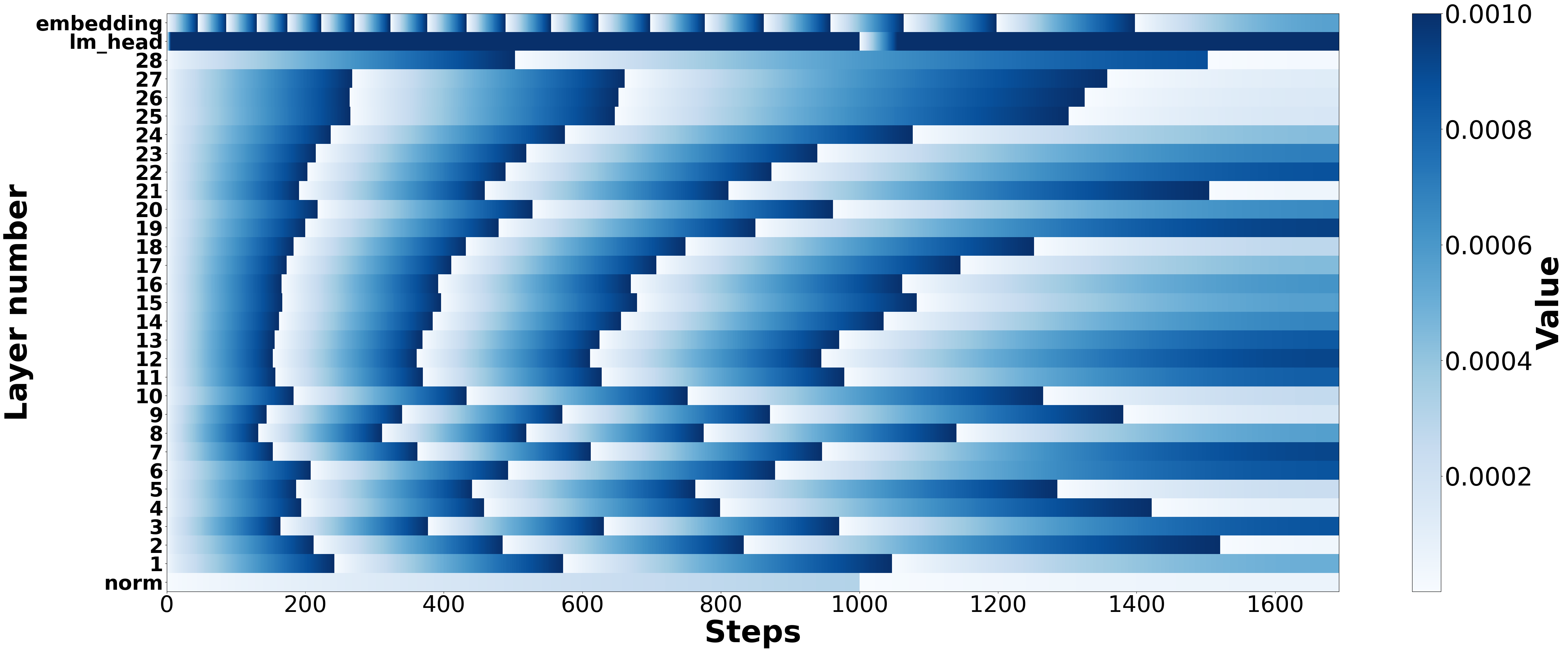}
    \caption{Temporal and layer-wise heterogeneity of model updates.}
    \label{fig:visual}
\end{figure}

Many existing studies have observed the unbalanced updates to different layers of the LLM models at different training phases~\cite{jawahar-etal-2019-bert,phang-etal-2021-fine,temprature}.
To better showcase that, we visualize the per-layer weight updates for an open-source Qwen2.5-7B model in Figure~\ref{fig:visual}. Here, the $x$-axis represents training steps, while the $y$-axis enumerates the model’s all 31 layers: 28 transformer layers and 3 auxiliary layers. The heatmap color intensity corresponds to the weight updates: darker regions indicate larger accumulated weight updates. 

Here, each layer’s accumulated weight update is reset to 0 once it exceeds the threshold, as if the layer had been checkpointed. The wave-like visualization clearly highlights the variation in update magnitudes across layers and steps, underscoring the non-uniformity of layer-wise learning dynamics during training.

Despite the substantial variability in actual weight updates across layers and steps, current checkpointing strategies uniformly save all layers. This uniform approach could be inefficient, as the most significant model updates may happen asynchronously across layers. Such unbalanced updates motivate our layer-wise checkpointing strategy.

%% file: methodology.tex
\section{Design and Implementation}\label{sec:Method}

\begin{figure*}[ht!]
    \centering
    \includegraphics[width=0.78\linewidth]{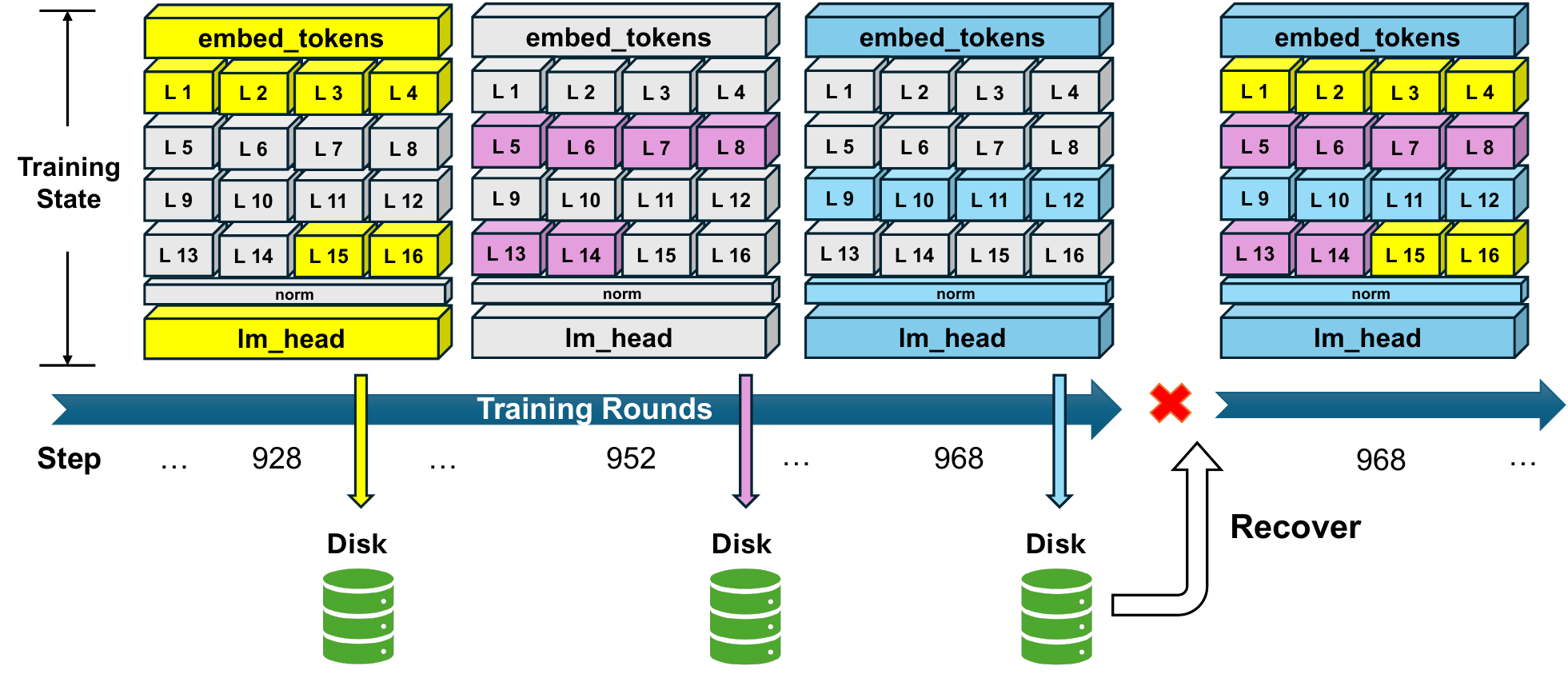}
    \caption{\texttt{LayerCheck} workflow. Layers are persisted at different iterations based on accumulated drift; recovery assembles the most recent persisted state of each layer into a composite checkpoint.}
    \label{fig:workflow}
\end{figure*}

This section presents the design and implementation of \texttt{LayerCheck}, a layer-wise selective checkpointing mechanism for LLM post-training.
\texttt{LayerCheck} is motivated by a repeatedly observed training-system property: different Transformer layers drift at different paces, making full-model snapshots frequently redundant.

At a high level, \texttt{LayerCheck} operates in three stages.
(1) \emph{Online monitoring:} during training, it maintains a lightweight per-layer drift statistic and decides which layers to persist at each checkpoint opportunity.
(2) \emph{Selective persistence:} when a layer is selected, \texttt{LayerCheck} writes the \emph{full layer state}, including both weights and optimizer tensors, so that recovery does not require delta replay.
(3) \emph{Composite recovery:} after a failure, it reconstructs a complete training state by merging the most recently persisted version of each layer, which may originate from different iterations.

The central fidelity concern is \emph{mixed-timestamp recovery}: selective persistence yields a checkpoint where different layers (and their Adam moments) can be restored from different steps.
To control this effect, we enforce a simple bounded-staleness invariant ($S_{\max}$), which also underpins the convergence argument in \S\ref{sec:convergence_layercheck}.

\subsection{Overall Workflow}
\label{subsec:overall-workflow}

Figure~\ref{fig:workflow} illustrates the workflow along the training timeline.
The $x$-axis is the iteration index.
Each block corresponds to the full training state at that iteration, while each box within the block represents a layer state consisting of (i) the layer parameters and (ii) the optimizer tensors associated with those parameters (e.g., Adam first/second moments under AdamW).

\textbf{From full snapshots to per-layer persistence.}
Conventional checkpointing persists the entire model and optimizer state at each checkpoint.
In contrast, \texttt{LayerCheck} evaluates every layer at runtime and persists only a subset of layers at most checkpoints.
A layer is selected when its \emph{accumulated normalized drift} since its last persistence exceeds a threshold.
Intuitively, fast-changing layers are refreshed more frequently, while slow-changing layers are skipped, reducing checkpoint I/O and storage.

\textbf{Composite checkpoint at recovery (no rollback).}
When a failure occurs after iteration $t_{\mathrm{fail}}$ completes, \texttt{LayerCheck} recovers by selecting, for each layer $\ell$, the most recent successfully persisted layer checkpoint at time $t_\ell \le t_{\mathrm{fail}}$ and assembling these layer states into a full training checkpoint.
This composite checkpoint can contain mixed timestamps across layers; however, because each layer is restored together with its corresponding optimizer tensors, the recovered optimizer state remains \emph{internally consistent within each layer}.
Training then resumes directly at iteration $t_{\mathrm{fail}}{+}1$ without rolling back to an older full-model snapshot.

\subsection{Layer-wise Monitoring and Checkpointing}
\label{subsec:monitoring}

This subsection details how \texttt{LayerCheck} quantifies per-layer update significance and triggers layer persistence online.

\textbf{Accumulated Per-layer Updates.} 
Checkpointing serves fault tolerance, so our goal is to ensure that resuming from a recovered composite checkpoint yields statistically indistinguishable model quality compared to a failure-free run.
Identifying, at checkpoint time, which individual parameter updates matter most is generally infeasible.
Moreover, tracking per-parameter changes would require auxiliary state comparable to the model size.
\texttt{LayerCheck} therefore tracks drift at the granularity of layers.

For each layer $\ell$, we maintain an accumulator $\Delta_\ell$ initialized to $0$.
At iteration $t$, after the optimizer computes the parameter update, we estimate a normalized update magnitude:
\begin{equation}
    u_\ell(t)
    =
    \frac{\mathrm{mean}\bigl(\lvert W_\ell(t) - W_\ell(t-1) \rvert\bigr)}
         {\mathrm{mean}\bigl(\lvert W_\ell(t-1) \rvert\bigr) + \varepsilon_w},
\end{equation}
where $W_\ell(t)$ denotes the parameter tensor(s) of layer $\ell$ at iteration $t$, $\mathrm{mean}(\cdot)$ is taken element-wise across that layer's parameters, and $\varepsilon_w$ is a small constant for numerical stability.
We accumulate drift over time:
\begin{equation}
    \Delta_\ell \leftarrow \Delta_\ell + u_\ell(t).
\end{equation}
We use the absolute value to obtain an upper-bound style estimate (conservative counting): it avoids cancellation across coordinates and biases toward persisting slightly more layers rather than missing meaningful drift.
Using the mean reduces sensitivity to outliers and yields a stable layer-level signal.

\textbf{Thresholding with learning-rate adaptation.}
A layer is persisted when its accumulated drift exceeds a threshold:
\begin{equation}
    \Delta_\ell \ge K(t).
\end{equation}
Instead of a single fixed threshold, \texttt{LayerCheck} uses a base threshold $K_0$ and scales it with the current learning rate to account for common LR schedules (warmup/decay) under AdamW.
Concretely, we set $K(t) = K_0 \cdot s(\mathrm{lr}(t))$, where $s(\cdot)$ is a lightweight scaling function derived from the LR schedule (e.g., proportional to the current LR relative to the initial LR).
When a layer is persisted, \texttt{LayerCheck} writes the full layer state and resets the accumulator:
\begin{equation}
    \Delta_\ell \leftarrow 0.
\end{equation}

\subsection{Bounding Mixed-Timestamp Staleness}

\label{subsec:staleness}

Selective persistence implies that at recovery time, different layers may be restored from different iterations.
We quantify this effect via \emph{staleness}.
If a failure occurs at iteration $\tau$, and layer $\ell$ was last persisted at iteration $t_\ell \le \tau$, then the layer staleness is $\tau - t_\ell$.
We additionally define an overall (size-weighted) checkpoint staleness metric for reporting, where larger layers contribute proportionally more.

While threshold-based persistence typically keeps most layers fresh, some layers (e.g., normalization-related tensors) can drift very slowly and may otherwise remain unpersisted for long periods.
To prevent pathological cases and to make recovery robust, \texttt{LayerCheck} enforces a staleness guard:
\begin{equation}
\label{eq:staleness_invariant}
\max_{\ell}(\tau - t_\ell) \le S_{\max}.
\end{equation}
If a layer reaches staleness $S_{\max}$, \texttt{LayerCheck} forces persistence of that layer at the next checkpoint opportunity even if its drift accumulator has not crossed the threshold, and then resets its accumulator $\Delta_\ell$ to 0. This guard applies to the \emph{maximum per-layer staleness} and serves as a worst-case safety bound. In practice, however, the checkpoint-wide mismatch is better reflected by a parameter-size-weighted average staleness, since a large maximum can be caused by a very small tensor (e.g., normalization layer) whose contribution to the full checkpoint is negligible. In our experiments, with $K_0=10^{-3}$, most layers are naturally refreshed roughly every $\sim 200$ iterations, the parameter-weighted average staleness remains low, and the bound $S_{\max}=1000$ acts as a rarely triggered safety backstop. It is shown in Figure~\ref{fig:staleness} that the maximum (blue line) is shaped by the normalization layer mainly from step 200 to step 1000, but the parameter-size-weighted average staleness (red line) reflects the mean staleness of all the layers better.
This invariant is also the key system property used in the convergence analysis (\S\ref{sec:convergence_layercheck}).

\begin{figure}[ht!]
    \centering
    \includegraphics[width=0.8\linewidth]{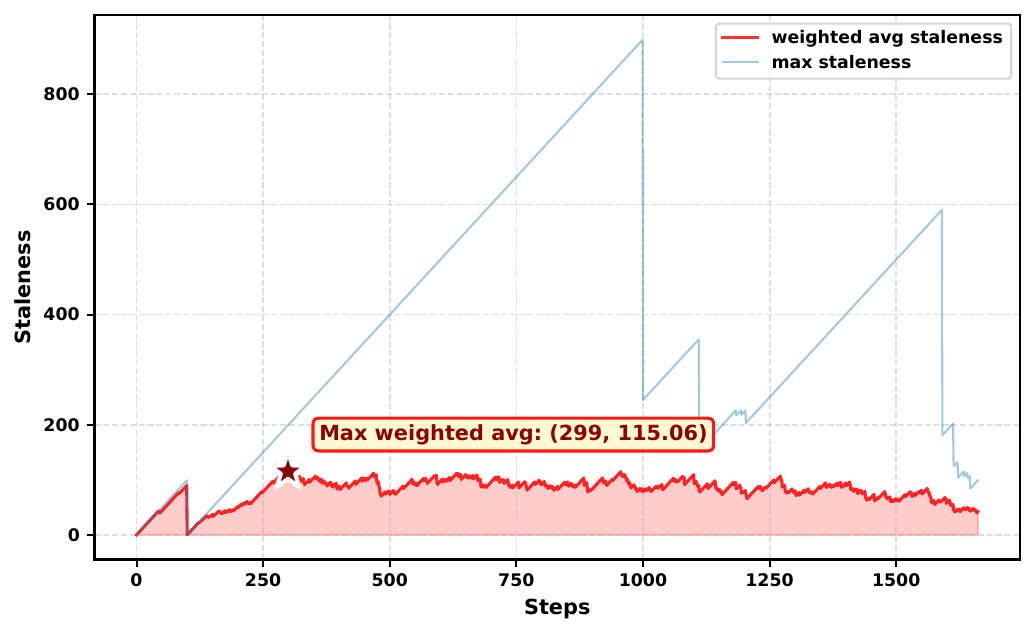}
    \caption{Aggregate staleness during Qwen2.5-7B fine-tuning with $K_0=10^{-3}$. The annotation marks the maximum parameter-weighted average staleness.}
    \label{fig:staleness}
\end{figure}

\subsection{Implementation on PyTorch/DeepSpeed}
\label{subsec:implementation}

We implement \texttt{LayerCheck} on top of PyTorch and DeepSpeed ZeRO-3~\cite{deepspeed}.
The design goal is \emph{non-intrusive integration}: between failures, training executes the standard AdamW update; \texttt{LayerCheck} only augments the optimizer with lightweight drift accounting and selectively filters what is serialized during checkpointing.

\subsubsection{Layer-aligned optimizer state layout.}
A practical challenge is that optimizer states are stored as flattened tensors and parameter groups, which do not directly expose a layer boundary.
To enable per-layer persistence and recovery, we reconstruct the optimizer parameter groups so that they mirror the model's layer-wise hierarchy while preserving the original weight-decay policy.
In our implementation, each Transformer layer is split into two parameter groups (with-weight-decay and without-weight-decay), while auxiliary layers that contain only one type are mapped to a single group.
Because this grouping order is consistent given the model configuration (number of Transformer blocks and weight-decay settings), we can locate the optimizer shard corresponding to a layer during serialization and reconstruction.

\subsubsection{Reusing the optimizer update to compute drift.}
Computing $W(t)-W(t-1)$ naively would introduce extra tensor copies. However, we piggyback drift computation on the optimizer update path to avoid extra tensor copies and keep monitoring overhead low.

\subsubsection{Threshold defaults and special-casing embeddings.}
With LR-adaptive thresholding, we set the base threshold to $K_0=10^{-3}$ for most layers.
This scale is consistent with prior observations that $10^{-3}$-level relative parameter drift is a meaningful unit during fine-tuning~\cite{Amber,taskvectorquantization}.
Embedding layers often exhibit smaller updates; to avoid forcing frequent persistence of embeddings, we use a larger base threshold for embeddings (e.g., $10^{-2}$).

We also disable selective persistence during an initial warm-up window (first 200 steps) and take a full base checkpoint that serves as a fallback for rarely-updated layers.

\subsection{Recovery from \texttt{LayerCheck} Checkpoints}

When a failure occurs at step $t_{\mathrm{fail}}$, \texttt{LayerCheck} reconstructs a full checkpoint in three steps.
\begin{enumerate}[leftmargin=1.5em,nosep]
    \item For each layer $\ell$, find the most recent persisted layer checkpoint timestamp $t_\ell \le t_{\mathrm{fail}}$.
    \item Load the corresponding layer state \mbox{$(W_\ell(t_\ell),\, O_\ell(t_\ell))$}, where $O_\ell$ denotes the optimizer tensors associated with $W_\ell$ (e.g., Adam moments). If a layer has no prior layer checkpoint, fall back to the base full checkpoint at $t_{\mathrm{base}}$.
    \item Assemble a complete training state by placing each loaded layer state back into its original layout.
\end{enumerate}
Training then resumes from the reconstructed composite checkpoint.

\textit{Metadata-driven reconstruction.}
During training, \texttt{LayerCheck} records the persistence history of each layer in each step. Upon failure, recovery derives a per-layer reconstruction plan from this metadata and assembles the most recent valid layer states into a DeepSpeed-compatible full checkpoint.

\subsection{Convergence Analysis of \texttt{LayerCheck}}
\label{sec:convergence_layercheck}

Recovery replaces $(\theta_\tau,m_\tau,v_\tau)$ with a mixed-timestamp composite state. Following ExCP~\cite{liexcp}, we provide a proof-of-concept analysis showing that bounded staleness adds a bounded term to Adam's standard regret bound without changing its asymptotic average-regret rate.

\subsubsection{Setup and assumptions} We follow the standard Adam regret analysis of Kingma \& Ba~\cite{adam}, which assumes a convex per-iteration loss. Let $\theta_t\in\mathbb{R}^d$ be the parameter vector, $f_t(\cdot)$ be the per-iteration loss, and $g_t=\nabla f_t(\theta_t)$. Adam maintains $m_t$ and $v_t$ using $\beta_1,\beta_2\in(0,1)$.
We also adopt standard boundedness assumptions used in Adam analyses~\cite{adam}:
\begin{equation}
\label{eq:bounded_grad}
\|g_t\|_2\le G,\qquad \|g_t\|_\infty\le G_\infty,\quad \forall t,
\end{equation}
\begin{equation}
\label{eq:bounded_iterates}
\|\theta_n-\theta_m\|_2 \le D,\quad \|\theta_n-\theta_m\|_\infty \le D_\infty,\quad \forall m,n\in\{1,\dots,T\}.
\end{equation}
We assume that the recovered composite state $\tilde{\theta}_\tau$ lies
in the same bounded domain and use the standard Adam initialization
$m_0=v_0=0$.

Following ExCP~\cite{liexcp}, we use $\alpha_t=\alpha/\sqrt{t}$, $\beta_{1,t}=\beta_1\lambda^{t-1}$ for $\lambda\in(0,1)$, and assume $\beta_1^2/\sqrt{\beta_2}<1$. For simplicity, we omit hats and absorb Adam's bounded bias-correction factors into the constants.

\subsubsection{Composite recovery model}
Suppose a failure occurs at iteration $\tau$.
For each layer $\ell\in\{1,\dots,L\}$, let $t_\ell\le\tau$ be the iteration of its most recent persisted layer checkpoint.
Because \texttt{LayerCheck} stores weights together with their layer-local optimizer tensors, recovery forms
\begin{equation}
\label{eq:composite_state}
\tilde{\theta}_\tau \triangleq \bigoplus_{\ell=1}^{L} \theta^{(\ell)}_{t_\ell},\quad
\tilde{m}_\tau \triangleq \bigoplus_{\ell=1}^{L} m^{(\ell)}_{t_\ell},\quad
\tilde{v}_\tau \triangleq \bigoplus_{\ell=1}^{L} v^{(\ell)}_{t_\ell}.
\end{equation}
Let $t(i)$ denote the timestamp of the layer containing coordinate $i$.
Under the staleness guard, $\tau-t(i)\le S_{\max}$ for all $i$.

\begin{lemma}[Second-moment deviation over $S_{\max}$ steps]
\label{lem:v_deviation}
Assume \eqref{eq:bounded_grad} and $v_0=0$. For any
$s\in[\tau-S_{\max},\tau]$ and every coordinate $i$,
\begin{equation}
\label{eq:v_deviation}
|v_{\tau,i}-v_{s,i}|
\le
2(1-\beta_2^{S_{\max}})G_\infty^2.
\end{equation}
\end{lemma}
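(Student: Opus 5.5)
We unroll the Adam second-moment recursion $v_{t,i}=\beta_2 v_{t-1,i}+(1-\beta_2)g_{t,i}^2$ from $s$ to $\tau$ and compare the two endpoints coordinatewise. Writing $k=\tau-s\in[0,S_{\max}]$, repeated substitution gives
\begin{equation*}
v_{\tau,i}=\beta_2^{k}v_{s,i}+(1-\beta_2)\sum_{j=0}^{k-1}\beta_2^{j}g_{\tau-j,i}^2 ,
\end{equation*}
so that
\begin{equation*}
v_{\tau,i}-v_{s,i}=-(1-\beta_2^{k})\,v_{s,i}+(1-\beta_2)\sum_{j=0}^{k-1}\beta_2^{j}g_{\tau-j,i}^2 .
\end{equation*}
The first step is thus to obtain this closed-form difference. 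It isolates a decay term of $v_{s,i}$ and a fresh accumulation of squared gradients over the window.

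Next we bound each piece with the assumption \eqref{eq:bounded_grad}. Since $v_0=0$ and $v_{s,i}$ is a convex-type combination $(1-\beta_2)\sum_{j<s}\beta_2^j g_{s-j,i}^2$, we get $0\le v_{s,i}\le(1-\beta_2^{s})G_\infty^2\le G_\infty^2$. This gives $0\le(1-\beta_2^{k})v_{s,i}\le(1-\beta_2^{k})G_\infty^2$. For the fresh sum, $0\le g_{t,i}^2\le G_\infty^2$ and the geometric series identity $(1-\beta_2)\sum_{j=0}^{k-1}\beta_2^j=1-\beta_2^{k}$ give $0\le(1-\beta_2)\sum_{j=0}^{k-1}\beta_2^{j}g_{\tau-j,i}^2\le(1-\beta_2^{k})G_\infty^2$.

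The two terms appear with opposite signs and each lies in $[0,(1-\beta_2^k)G_\infty^2]$. The triangle inequality therefore gives $|v_{\tau,i}-v_{s,i}|\le 2(1-\beta_2^{k})G_\infty^2$. A sharper factor of $1$ would follow because the difference of two numbers in the same interval is bounded by the interval length, but the stated factor $2$ is all we need. Finally, $k\le S_{\max}$ and $\beta_2\in(0,1)$ give $1-\beta_2^{k}\le1-\beta_2^{S_{\max}}$, which yields \eqref{eq:v_deviation}.

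The only delicate point is bookkeeping rather than analysis. If bias-corrected $\hat v$ were meant, the normalization $1/(1-\beta_2^t)$ would break the clean bound. The setup explicitly drops hats and absorbs bias-correction factors into constants, so we work with the raw $v_t$ and note that any such factors enter only through these constants.
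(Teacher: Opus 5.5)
Your proof is correct and takes the same approach as the paper: unroll the recursion from $s$ to $\tau$, bound the decay term $(1-\beta_2^k)v_{s,i}$ and the fresh accumulation term by $(1-\beta_2^k)G_\infty^2$ each (using $0\le v_{s,i}\le G_\infty^2$ and $g_{t,i}^2\le G_\infty^2$), then use $1-\beta_2^k\le 1-\beta_2^{S_{\max}}$ for $k\le S_{\max}$. You spell out the steps the paper's one-line proof compresses, and your side remark is also right: both terms are nonnegative and share the same upper bound, so the factor $2$ can be dropped.
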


\begin{proof}
Let $k=\tau-s\le S_{\max}$. Unrolling
$v_{t,i}=\beta_2v_{t-1,i}+(1-\beta_2)g_{t,i}^2$ and using
$0\le v_{s,i}\le G_\infty^2$ gives
\[
|v_{\tau,i}-v_{s,i}|
\le 2(1-\beta_2^k)G_\infty^2
\le 2(1-\beta_2^{S_{\max}})G_\infty^2.
\]
\end{proof}
The remaining recovered states are also bounded:
\begin{equation}
\label{eq:theta_m_deviation}
\|\tilde{\theta}_\tau-\theta_\tau\|_\infty\le D_\infty,\qquad
\|\tilde{m}_\tau-m_\tau\|_\infty\le2G_\infty.
\end{equation}

\begin{theorem}[Proof-of-concept regret bound under bounded staleness]
\label{thm:layercheck_rate}
Assume \eqref{eq:bounded_grad}--\eqref{eq:bounded_iterates} and the staleness guard \eqref{eq:staleness_invariant}.
Consider one recovery at iteration $\tau$ and denote the regret of the recovered run by $\tilde{R}(T)$.
Following the ExCP-style proof-of-concept decomposition, we retain the standard bounded-domain and moment terms for $\theta$ and $m$ and isolate the additional second-moment term:
\begin{equation}
\label{eq:regret_layercheck}
\tilde{R}(T) \le R_{\texttt{Adam}}(T) + \Delta_{\mathrm{stale}}(T),
\end{equation}
where $R_{\texttt{Adam}}(T)=O(\sqrt{T})$. Applying
Lemma~\ref{lem:v_deviation} and
$|\sqrt{a}-\sqrt{b}|\le\sqrt{|a-b|}$ gives
\begin{equation}
\label{eq:delta_stale}
\begin{aligned}
\Delta_{\mathrm{stale}}(T)
&\triangleq
\frac{D^2\sqrt{T}}{2\alpha(1-\beta_1)}
\sum_{i=1}^{d}
\left|\sqrt{v_{\tau,i}}-\sqrt{v_{t(i),i}}\right| \\
&\le
\frac{D^2dG_\infty}{2\alpha(1-\beta_1)}
\sqrt{2T(1-\beta_2^{S_{\max}})}
=O(\sqrt{T}).
\end{aligned}
\end{equation}
Consequently, the average regret satisfies $\tilde{R}(T)/T=O(1/\sqrt{T})$.
\end{theorem}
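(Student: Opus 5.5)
The plan is to reproduce the Kingma--Ba regret decomposition for Adam and to locate the single place where the composite state $(\tilde{\theta}_\tau,\tilde{m}_\tau,\tilde{v}_\tau)$ from \eqref{eq:composite_state} enters.

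\textbf{Per-coordinate inequality.} By convexity, $f_t(\theta_t)-f_t(\theta^*)\le\sum_i g_{t,i}(\theta_{t,i}-\theta^*_i)$. I will unroll the Adam update $\theta_{t+1,i}=\theta_{t,i}-\alpha_t m_{t,i}/\sqrt{v_{t,i}}$ and square it. This gives the usual per-coordinate inequality: $g_{t,i}(\theta_{t,i}-\theta^*_i)$ is bounded by a telescoping term
\[
\frac{\sqrt{v_{t,i}}}{2\alpha_t(1-\beta_{1,t})}\bigl((\theta_{t,i}-\theta^*_i)^2-(\theta_{t+1,i}-\theta^*_i)^2\bigr),
\]
plus momentum/cross terms involving $m_{t-1,i}$ and $\sqrt{v_{t,i}}$.

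\textbf{Bounding the recovered run.} For $t\ne\tau$, the recovered run obeys exactly the Adam recursions, starting from whatever state it currently has. The momentum and cross terms are therefore bounded as in Adam. They use only \eqref{eq:bounded_grad}, \eqref{eq:bounded_iterates}, $\beta_1^2/\sqrt{\beta_2}<1$ and $\beta_{1,t}=\beta_1\lambda^{t-1}$, giving the $O(\sqrt{T})$ pieces of $R_{\texttt{Adam}}(T)$. The contribution from the recovery step of $\tilde{\theta}_\tau$ and $\tilde{m}_\tau$ is absorbed into the same constants. This uses \eqref{eq:theta_m_deviation} together with the assumption that $\tilde{\theta}_\tau$ stays in the bounded domain, so $D,D_\infty$ still control every squared distance. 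Likewise $\|\tilde{m}_\tau\|_\infty\le G_\infty$, so the geometric momentum sums are unchanged up to constants.

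\textbf{Isolating the staleness term.} The genuinely new piece comes from the telescoping sum. Summation by parts turns it into
\[
\sum_i \frac{1}{2(1-\beta_1)}\sum_t (\theta_{t,i}-\theta^*_i)^2\Bigl(\frac{\sqrt{v_{t,i}}}{\alpha_t}-\frac{\sqrt{v_{t-1,i}}}{\alpha_{t-1}}\Bigr).
\]
In uninterrupted Adam this is bounded by $\frac{D^2}{2\alpha(1-\beta_1)}\sum_i\sqrt{T v_{T,i}}$, using the monotonicity-type argument of Kingma--Ba. At the recovery step, $\sqrt{v_{\tau,i}}$ is replaced by $\sqrt{v_{t(i),i}}$. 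This creates one extra increment $(\sqrt{v_{\tau,i}}-\sqrt{v_{t(i),i}})/\alpha_\tau$, weighted by $(\theta_{\tau,i}-\theta^*_i)^2\le D^2$, and $1/\alpha_\tau=\sqrt{\tau}/\alpha\le\sqrt{T}/\alpha$. Taking absolute values therefore yields exactly $\Delta_{\mathrm{stale}}(T)$ in \eqref{eq:delta_stale}, and everything else forms $R_{\texttt{Adam}}(T)$. This proves \eqref{eq:regret_layercheck}.

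\textbf{Evaluating $\Delta_{\mathrm{stale}}$.} By the staleness guard, $t(i)\in[\tau-S_{\max},\tau]$, so Lemma~\ref{lem:v_deviation} applies. Combined with $|\sqrt a-\sqrt b|\le\sqrt{|a-b|}$, each summand is at most $G_\infty\sqrt{2(1-\beta_2^{S_{\max}})}$. Summing over $d$ coordinates gives the stated bound, which is $O(\sqrt{T})$ since $S_{\max}$ is fixed. Dividing the total by $T$ gives $\tilde R(T)/T=O(1/\sqrt T)$.

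\textbf{Main obstacle.} The hard step is the summation-by-parts accounting at $\tau$. One must check that the jump in $v$ (which may be downward, breaking the positivity that Kingma--Ba exploit) contributes only one boundary increment, and that all later increments still telescope. To handle this, I would split the sum into $t<\tau$ and $t\ge\tau$, bound each segment by the standard argument, and collect the mismatch at the seam into the single absolute-value term above.
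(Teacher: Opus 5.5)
Your proposal is correct and follows the paper's route: the paper asserts the ExCP-style decomposition $\tilde{R}(T)\le R_{\texttt{Adam}}(T)+\Delta_{\mathrm{stale}}(T)$ without deriving it, and then bounds each summand $|\sqrt{v_{\tau,i}}-\sqrt{v_{t(i),i}}|$ using Lemma~\ref{lem:v_deviation} and $|\sqrt{a}-\sqrt{b}|\le\sqrt{|a-b|}$, exactly as in your final step; your summation-by-parts treatment of the single seam increment at $\tau$ (weight $D^2$, and $1/\alpha_\tau\le\sqrt{T}/\alpha$) supplies the derivation of $\Delta_{\mathrm{stale}}$ that the paper leaves implicit. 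One caveat applies equally to the paper: Adam does not actually guarantee nonnegativity of $\sqrt{v_{t,i}}/\alpha_t-\sqrt{v_{t-1,i}}/\alpha_{t-1}$ on the uninterrupted segments (a known gap in the Kingma--Ba argument), so both arguments rest on taking $R_{\texttt{Adam}}(T)=O(\sqrt{T})$ as given.
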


\subsubsection{Multiple failures.}
If recoveries occur at iterations $\tau_1<\cdots<\tau_K\le T$, the corresponding staleness terms add up. When $K$ is fixed independently of $T$, their sum remains $O(\sqrt{T})$, so the asymptotic average-regret rate is unchanged.

%% file: evalv2.tex
\section{Evaluation}
\label{sec:evaluation}
This section evaluates \texttt{LayerCheck} against the paper's central claim: \emph{layer-wise selective persistence can reduce checkpoint cost while maintaining safe and effective recovery}, even though the recovered training state is assembled from layers persisted at different timestamps. We therefore first examine whether \texttt{LayerCheck}'s mixed-timestamp recovery preserves stable post-failure training behavior.
At system-level, checkpoint-based fault tolerance incurs two kinds of overhead: checkpoint overhead during training and recovery overhead after a failure.
The former affects steady-state training efficiency, while the latter determines how much time is lost before useful training resumes.
Accordingly, we evaluate \texttt{LayerCheck} from two perspectives: the checkpoint cost it introduces during training and the recovery cost it incurs after failures. We focus on end-to-end training impact under I/O pressure and on recovery behavior under comparable recovery quality.
We organize experiments around three questions.
\begin{itemize}[leftmargin=1.5em]
    \item First, when a failure occurs and we restart from a composite (mixed-timestamp) checkpoint, does training remain stable and converge to the same quality as failure-free training?
    \item Second, in steady state, how much wall-clock time and storage can we save by avoiding unnecessary writes of slow-changing layers?
    \item Third, after a failure, how much time do we save in recovery once we control for the amount of rollback/staleness each method permits?
\end{itemize}

\subsection{Experimental Setup}
\label{subsec:eval-setup}
\subsubsection{Testbed}
All experiments run on a single 8-GPU server with NVIDIA A100 40GB GPUs and two AMD EPYC 7713 CPUs (2{,}048~GB DRAM).
The storage backend is a 100~TB Lustre parallel file system. And our evaluated node uses one Mellanox ConnectX-6 InfiniBand HCA, connected at HDR 200 Gbps.

We enable DeepSpeed ZeRO Stage-3 so that model states and optimizer tensors are sharded across GPUs.
All methods use AdamW~\cite{adamw}.
For uniform post-restart evaluation, each checkpoint additionally stores a BF16 copy of the model weights. This copy does not change \texttt{LayerCheck}'s selective persistence decisions; it simply ensures that evaluation reads the same representation across methods.

We evaluate three open-source LLMs spanning small to medium scale: Llama-3.2-1B~\cite{llama3}, Qwen-2.5-3B, and Qwen-2.5-7B~\cite{qwen2025qwen25technicalreport}.
We consider two supervised fine-tuning (SFT) workloads: MedQA~\cite{medqa} (medical domain) and OpenThoughts~\cite{guha2025openthoughtsdatarecipesreasoning} (reasoning).
Unless otherwise stated, each run trains for 1 epoch with maximum gradient norm 1 and initial learning rate $10^{-5}$.
Due to GPU memory constraints, we set sequence length to 2048, micro-batch size to 1, and gradient accumulation steps to 2 across all methods.

We compare \texttt{LayerCheck} against CheckFreq~\cite{checkfreq}, GEMINI~\cite{wangGEMINIFastFailure2023}, and LowDiff~\cite{lowdiff}.
These baselines reflect three distinct approaches: overlap checkpoint process with model training(CheckFreq), overlapped checkpoint and in-memory checkpoint (GEMINI), and reducing bytes written via differential information (LowDiff). Together with \texttt{LayerCheck}, they cover different points in the checkpointing design space and are therefore compared as representative standalone systems for frequent checkpointing/recovery settings. Unless otherwise stated, we follow default configurations in the original papers.

In all experiments, ``checkpoint time'' refers to wall-clock time spent on the checkpointing path and cannot be overlapped with other processes, including serialization, storage I/O, and any synchronization required by the framework to produce a consistent checkpoint.
This quantity directly affects training throughput, because time spent in checkpointing cannot be used for model training.

\subsubsection{Evaluation Methodology}
Under DeepSpeed ZeRO-3, model weights and optimizer states are sharded across ranks, and each rank performs checkpointing independently. The core behavior of LayerCheck — per-layer drift tracking, selective persistence, and composite recovery — is therefore determined by per-rank activity, enabling it to be faithfully evaluated in a single-node sharded setting where each rank's state mirrors the per-rank size in larger-scale parallel training. We adopt this methodology following recent LLM checkpointing systems~\cite{Amber} and extend to scale-out regimes via a data-driven simulation in~\ref{subsec:eval-scalability}, consistent with GEMINI~\cite{wangGEMINIFastFailure2023}.

We evaluate recovery fidelity using both post-restart loss trajectories and final downstream benchmark scores.

\subsection{Recovery Fidelity: Training After Mixed-Timestamp Recovery}
\label{subsec:eval-correctness-loss}

\begin{figure*}[t]
    \centering
    \begin{subfigure}[b]{0.45\textwidth}
        \centering
        \includegraphics[width=\linewidth]{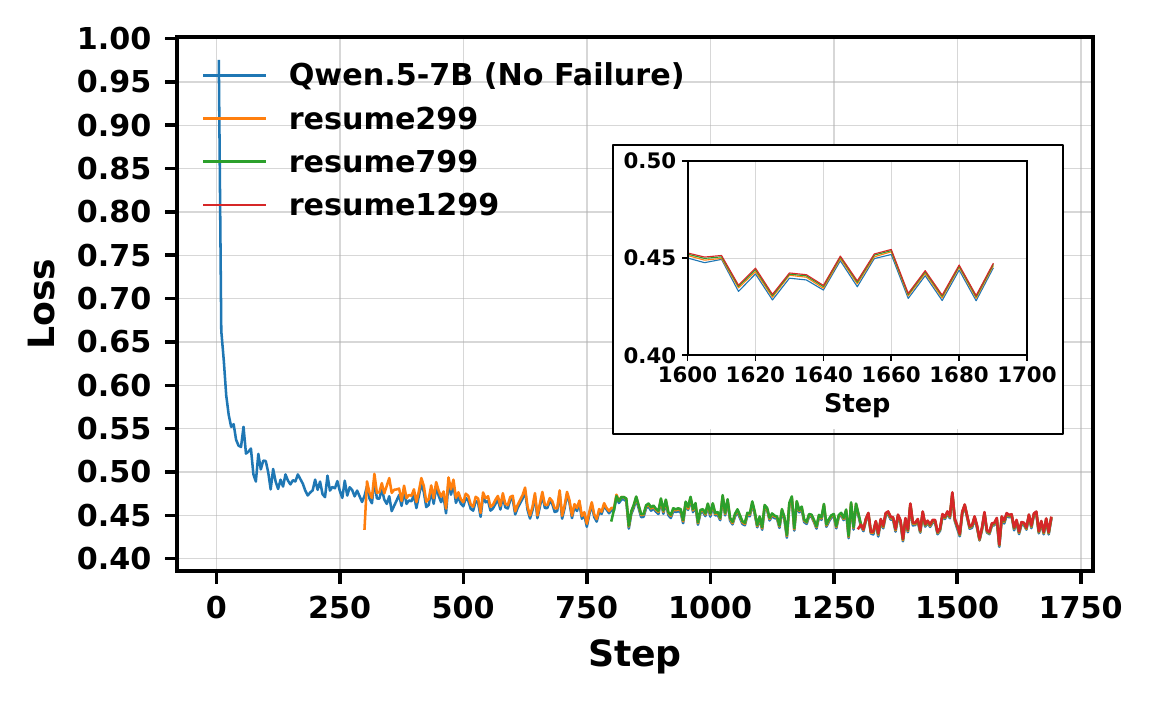}
        \caption{Loss trajectories under cascading composite recoveries (fixed $K=10^{-3}$).}
        \label{fig:loss_overlay}
    \end{subfigure}
    \hspace{1em}
    \begin{subfigure}[b]{0.45\textwidth}
        \centering
        \includegraphics[width=\linewidth]{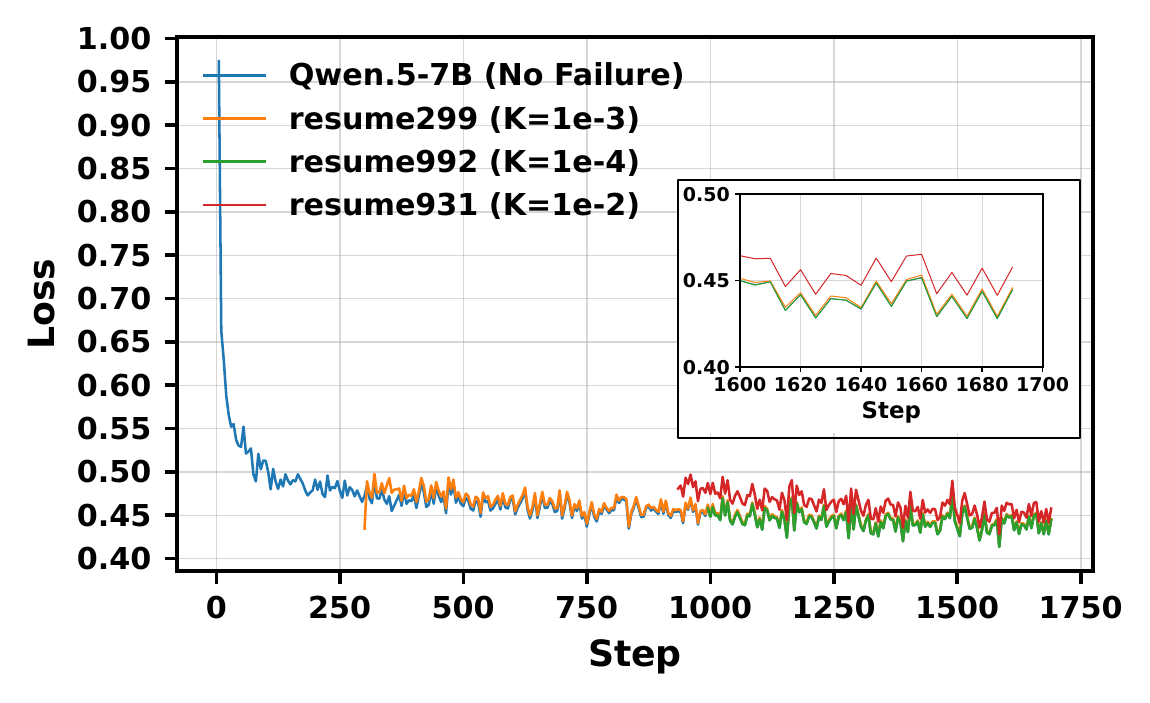}
    \caption{Loss trajectories under varying drift thresholds K (single failure).}
    \label{fig:loss_ablation}
    \end{subfigure}
    \caption{Recovery fidelity of LayerCheck on Qwen2.5-7B (OpenThoughts dataset).}
    \label{fig:loss}
\end{figure*}

The most distinctive aspect of \texttt{LayerCheck} is composite recovery: after a failure, the recovered training state is assembled by choosing, for each layer, the most recent persisted version of that layer's weights and associated optimizer tensors.
Because layers are persisted at different iterations, the reconstructed checkpoint inevitably contains mixed timestamps.
This raises a concrete risk: even if each individual layer is internally consistent (weights and optimizer moments originate from the same time), the overall model state may represent a configuration that never existed during failure-free training.

\texttt{LayerCheck} addresses this risk with an explicit system invariant.
Rather than attempting to force all layers to share the same timestamp at recovery, it enforces a bounded-staleness guard $S_{\max}$ that ensures no layer can be arbitrarily stale relative to the failure point.
This invariant is inspectable during runtime and is also the key assumption used in our convergence discussion (\S\ref{sec:convergence_layercheck}).

To evaluate recovery fidelity, we examine the training loss trajectories after restart. A mixed-timestamp checkpoint can perturb both parameters and optimizer moments; if this perturbation were large or adversarial, we would expect an instability divergence, or a widening gap indicating that recovery pushes training into a different regime.

For Qwen2.5-7B fine-tuning on OpenThoughts, we inject failures at the point of \emph{maximum parameter-weighted average staleness} for each configuration.

This is the most globally challenging moment for composite recovery because it maximizes the checkpoint-wide mismatch across layers while the maximum per-layer staleness still remains bounded by $S_{\max}$.
For $K{=}10^{-3}$, $10^{-2}$, and $10^{-4}$, these points occur at steps 299, 931, and 992, respectively.
We use the failure-free run as the visual reference, since exact full-state (lossless) recovery reproduces the failure-free trajectory by construction.

Figure~\ref{fig:loss_overlay} first examines a fixed operating point ($K{=}10^{-3}$) under cascading composite recoveries. The first failure is injected at step~299, followed by two additional failures at steps~799 and~1299, with training resuming from the reconstructed composite checkpoint after each one. The recovered trajectory remains close to the failure-free baseline throughout, with no visible sign of compounding instability or progressive drift. We then measure the post-restart training's final loss: the maximum absolute deviation from the failure-free trajectory is $0.0024$ (\textbf{0.54\%} relative), well within step-to-step training noise. This result suggests that the perturbation introduced by mixed-timestamp recovery does not accumulate into runaway optimization error across repeated failures, consistent with the multi-failure analysis in~\S\ref{sec:convergence_layercheck}.

Figure~\ref{fig:loss_ablation} next shows how recovery fidelity changes with the drift threshold $K$ under single-failure recovery.
Across all three thresholds, the resumed runs closely track the baseline after restart.
The patterns also match the intended trade-off.
Smaller $K$ refreshes layers more aggressively, so the post-recovery deviation is smaller.
Larger $K$ skips more writes, so the recovered state can be more stale and may exhibit a slightly higher loss band immediately after restart; however, this deviation remains bounded rather than amplifying with continued optimization.

Taken together, Figure~\ref{fig:loss} shows that LayerCheck recovery is robust in two complementary senses: it remains stable under repeated failures at a fixed operating point, and it remains well behaved across the freshness--efficiency trade-off induced by different $K$ values.
This is exactly the behavior the $S_{\max}$ guard is designed to enable.

\subsection{Checkpoint Efficiency: End-to-End Time and Storage}
\label{subsec:eval-io}

We next quantify steady-state checkpoint cost during model training.
Because \texttt{LayerCheck} targets redundant persistence of slow-changing layers, we evaluate an intentionally I/O-stressful checkpoint-per-iteration regime, as also used by prior checkpointing systems~\cite{wangGEMINIFastFailure2023,Datastate-LLM,lowdiff}. We use this setting to evaluate whether each system can sustain high-frequency checkpointing and to expose differences in written bytes, serialization work, and overlap behavior, rather than modeling every production workload.

\subsubsection{Time overhead}
Figure~\ref{fig:e2e-time} reports end-to-end runtime for 1 epoch on OpenThoughts. \texttt{LayerCheck} achieves the shortest training time across all three models, with overhead of only 11.7--20.0\% relative to checkpoint-free training ($t_{\text{net}}$), \textbf{1.31$\times$} faster than LowDiff on Llama3.2-1B and 3.76$\times$ faster than GEMINI on Qwen2.5-7B, with larger gaps for CheckFreq and DeepSpeed default.

Checkpoint fraction alone does not explain this. \texttt{LayerCheck}'s checkpoint fraction (10.7--12.0\%) is comparable to LowDiff's (9.3--15.9\%), yet end-to-end times differ substantially. The reason is that under per-iteration checkpointing, the baselines must drain full-model-scale checkpoint work each step. Once this work exceeds one training iteration, overlap breaks down and visible stalls emerge.
This is exactly where \texttt{LayerCheck} gains its systems advantage: under checkpoint pressure, it is not enough to pipeline checkpointing better; reducing the amount of state and serialization work itself is crucial.

Crucially, this overhead is dominated by checkpoint I/O (the hatched portion of each LayerCheck bar). Subtracting the checkpoint time, the remaining training time essentially matches $t_{\text{net}}$ across all three models, indicating that LayerCheck's per-iteration drift tracking adds negligible overhead to the optimizer update path itself. This is consistent with our piggyback design (\S\ref{subsec:implementation}), where drift computation reuses tensors already materialized during the AdamW update rather than introducing additional passes over model parameters.

The shared Lustre backend further amplifies these effects, but the qualitative trend is not specific to Lustre.
Methods that write less state and avoid repeated full-model rewrites are less exposed to storage variability and I/O backpressure.

Beyond reducing steady-state training stalls, this result also has an important implication for failures.
By lowering the cost of frequent persistence, \texttt{LayerCheck} makes it more practical to keep restart points close to the failure point.
We therefore examine recovery overhead under comparable recovery quality in the next subsection.

\begin{figure}[ht!]
    \centering
      \includegraphics[width=0.8\linewidth]{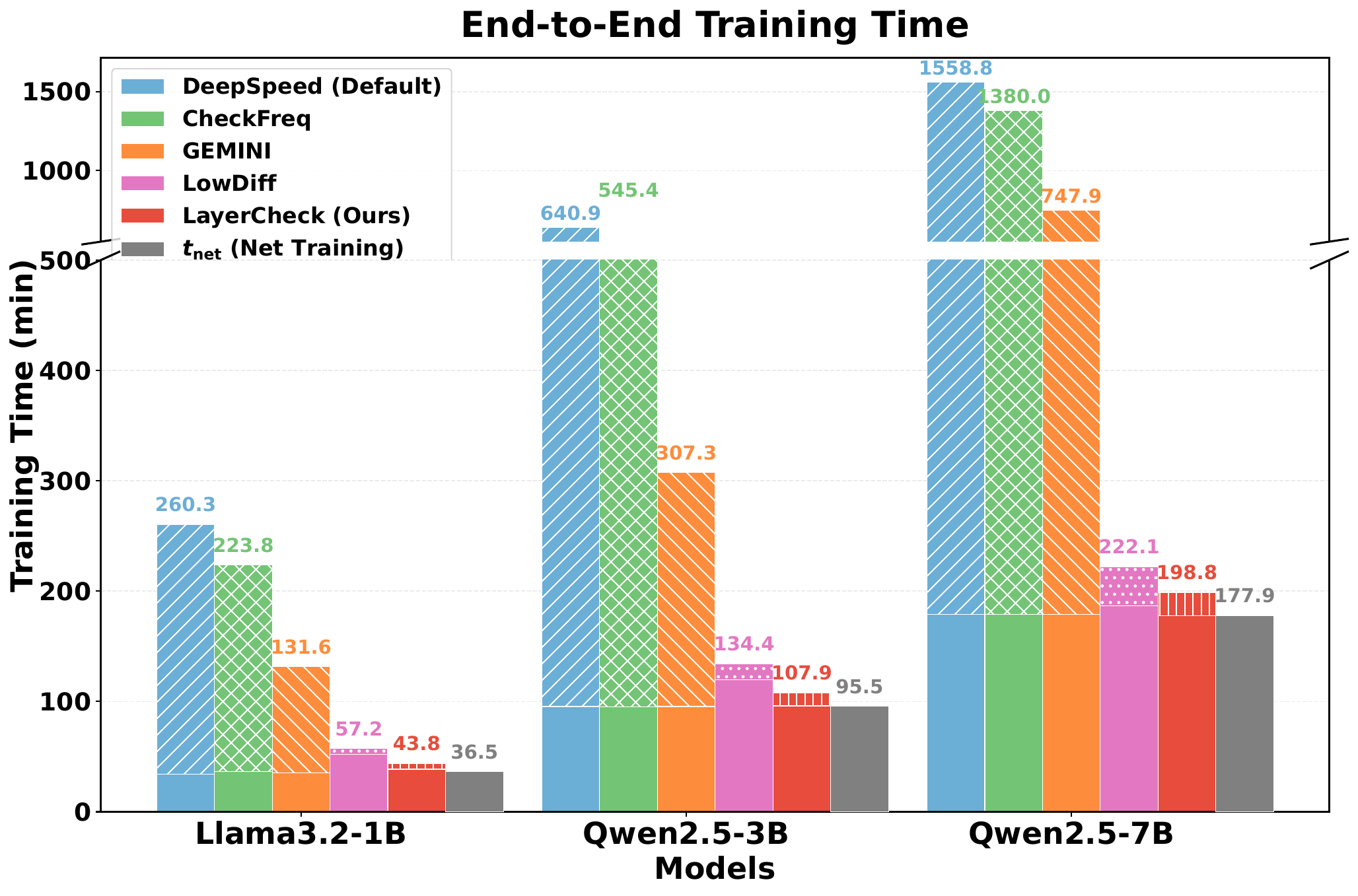}
      \caption{End-to-end training time for 1 epoch on OpenThoughts dataset with checkpoint-per-iteration. Hatched portions denote non-overlapped checkpointing time in training.}
      \label{fig:e2e-time}
\end{figure}

\subsubsection{I/O volume}
We further quantify I/O volume using checkpoint storage footprint.
To make this concrete, we report total checkpoint size over the run, the number of checkpoint events (directories), and the mean checkpoint size.
These capture complementary aspects of storage pressure: total footprint reflects cumulative write volume, while mean checkpoint size relates to per-event burstiness that affects latency and overlap.

Table~\ref{tab:ckpt_methods_openthoughts} reports checkpoint statistics on OpenThoughts for $K=10^{-3}$.
The ``\#Ckpt Layers'' column counts the total number of \emph{layer writes} across the run, highlighting the key distinction from full-state checkpointing where each checkpoint necessarily rewrites every layer.
As expected, \texttt{LayerCheck} produces many checkpoint events that contain only a small subset of layers.

The reductions in written bytes are substantial.
\texttt{LayerCheck} reduces total checkpoint storage by up to \textbf{22.6×} (average $\sim 17.1\times$) relative to LowDiff, and reduces mean per-checkpoint size by up to 6.6× (average $\sim 5.0\times$).
This result directly supports the paper's main systems argument: layer-level sparsity in ``meaningful change'' can be exploited to avoid rewriting most of the model most of the time.
In turn, fewer bytes translates to fewer stalls in two ways.
It reduces raw write time, and it increases the likelihood that persistence work can be pipelined without blocking the training loop, especially on storage backends with high variance in throughput.

\begin{table}[t]
\centering
\footnotesize
\setlength{\tabcolsep}{3pt}
\renewcommand{\arraystretch}{1.05}
\caption{Checkpoint statistics on OpenThoughts ($K=10^{-3}$). For each model, our method is listed last and highlighted in bold for readability.}
\label{tab:ckpt_methods_openthoughts}
\begin{tabular}{llrrrr}
\toprule
\makecell[l]{Model (Layers)} &
Method &
\makecell[r]{Ckpt Layers} &
\makecell[r]{Ckpt Dirs} &
\makecell[r]{Total Size (GB)} &
\makecell[r]{Mean Size (GB)} \\
\midrule
\makecell[l]{Qwen2.5-7B\\(31)}  & Default                 & 52452            & 1692            & 180367.20         & 106.60          \\
                                  & LowDiff                 & --               & 930             & 19258.68          & 20.71           \\
                                  & \textbf{LayerCheck}    & \textbf{319}    & \textbf{277}   & \textbf{1205.38} & \textbf{4.35}  \\
\midrule
\makecell[l]{Qwen2.5-3B\\(38)}  & Default                 & 64296            & 1692            & 73111.32          & 43.21           \\
                                  & LowDiff                 & --               & 930             & 8512.85           & 9.15            \\
                                  & \textbf{LayerCheck}    & \textbf{340}    & \textbf{273}   & \textbf{376.21}  & \textbf{1.38}  \\
\midrule
\makecell[l]{Llama3.2-1B\\(18)} & Default                 & 30456            & 1692            & 29305.44          & 17.32           \\
                                  & LowDiff                 & --               & 930             & 3837.05           & 4.12            \\
                                  & \textbf{LayerCheck}    & \textbf{266}    & \textbf{242}   & \textbf{264.82}  & \textbf{1.02}  \\
\bottomrule
\end{tabular}
\end{table}

Finally, we examine how the threshold $K$ controls the efficiency--freshness trade-off.
Table~\ref{tab:layercheck_ablation_datasets_qwen7b} reports the footprint across datasets for Qwen2.5-7B.
A larger $K$ persists fewer layers and yields fewer checkpoints, while a smaller $K$ refreshes layers more aggressively and increases total footprint.
We use $K=10^{-3}$ as the default operating point because it provides a practical balance between recovery fidelity and checkpoint cost in our evaluated workloads.

\begin{table}[t]
\centering
\footnotesize
\caption{\texttt{LayerCheck} ablation across datasets for Qwen2.5-7B (31 layers).}
\label{tab:layercheck_ablation_datasets_qwen7b}
\begin{tabular}{llrrrr}
\toprule
Dataset & $K$ &
\makecell[r]{Ckpt Layers} &
\makecell[r]{Ckpts} &
\makecell[r]{Total Size (GB)} &
\makecell[r]{Mean Size (GB)} \\
\midrule
OpenThoughts & $10^{-2}$ & 94   & 40   & 393.20   & 9.83 \\
             & $10^{-4}$ & 2398 & 1208 & 9375.56  & 7.76 \\
             & \textbf{$10^{-3}$} & 319  & 277  & 1205.38  & 4.35 \\
\midrule
MedQA        & $10^{-2}$ & 96   & 52   & 355.43   & 6.84 \\
             & $10^{-4}$ & 2821 & 1378 & 11388.00 & 8.26 \\
             & \textbf{$10^{-3}$} & 373  & 327  & 1452.00  & 4.44 \\
\bottomrule
\end{tabular}
\end{table}

\subsection{Recovery Overhead}
\label{subsec:eval-recovery}

Recovery overhead is meaningful only when recovery quality is controlled.
A method can appear ``fast'' simply by checkpointing much more frequently, while infrequent checkpointing can appear ``slow'' because it must replay more lost iterations after a failure. Per-iteration checkpointing imposes prohibitive overhead on the baselines, and even setting that aside, all methods would replay at most one step after a failure, making the recovery comparison uninformative. We therefore use a matched-freshness regime instead. LowDiff and \texttt{LayerCheck} retain their native per-step persistence semantics, since both mechanisms are defined around continual step-level state preservation. For rollback-based baselines (DeepSpeed default, CheckFreq, and GEMINI), we set the checkpoint interval to match the average staleness observed under \texttt{LayerCheck}, which is 106 training steps on Qwen2.5-7B. Assuming failures are uniformly distributed within a checkpoint interval, this corresponds to an average rollback distance of 53 steps after a failure.

We decompose the total recovery time for a single failure as
\begin{equation}
\label{eq:recovery-time-def}
    t_{\mathrm{recovery}} = t_{\mathrm{load}} + t_{\mathrm{retrain}} + t_{\mathrm{reconstruct}},
\end{equation}
where $t_{\mathrm{load}}$ is the time to read checkpoint state from storage, $t_{\mathrm{retrain}}$ is the time to replay lost iterations after rollback, and $t_{\mathrm{reconstruct}}$ is method-specific reconstruction cost (e.g., layer merging or decompression).

In Figure~\ref{fig:decomp}, we report the aligned recovery overhead
\begin{equation}
\label{eq:recovery-overhead-aligned}
    t_{\mathrm{recovery}}^{\star} = t_{\mathrm{load}} + t_{\mathrm{retrain}},
\end{equation}

We exclude \(t_{\mathrm{reconstruct}}\) from the cross-method comparison because GEMINI and LowDiff do not provide complete public recovery implementations, and reconstruction cost is highly implementation-dependent. This narrows the scope of the comparison but preserves comparability across frameworks.

Figure~\ref{fig:decomp} reports the aligned recovery overhead across all three models under this matched-freshness setting. \texttt{LayerCheck} achieves the lowest recovery overhead at 10s, compared to 34s for LowDiff (\textbf{3.4×} faster) for Qwen2.5-7B.
Under this protocol, the rollback-based full-state methods naturally become similar, because they replay the same expected rollback window and differ mainly in where these state loaded.
\texttt{LayerCheck} achieves the lowest recovery overhead among the compared methods.
This improvement follows the same logic as the steady-state results: by writing less state during training, \texttt{LayerCheck} has less state to load after failures.
In addition, composite recovery reduces the need to roll back and replay iterations; it is designed to restart near the failure point under the bounded-staleness invariant.
LowDiff also benefits from per-step recovery semantics, but \texttt{LayerCheck} remains lower because selective persistence reduces the amount of state that must be materialized at restart than the differential checkpoint.

\begin{figure}[ht!]
    \centering
    \includegraphics[width=0.8\linewidth]{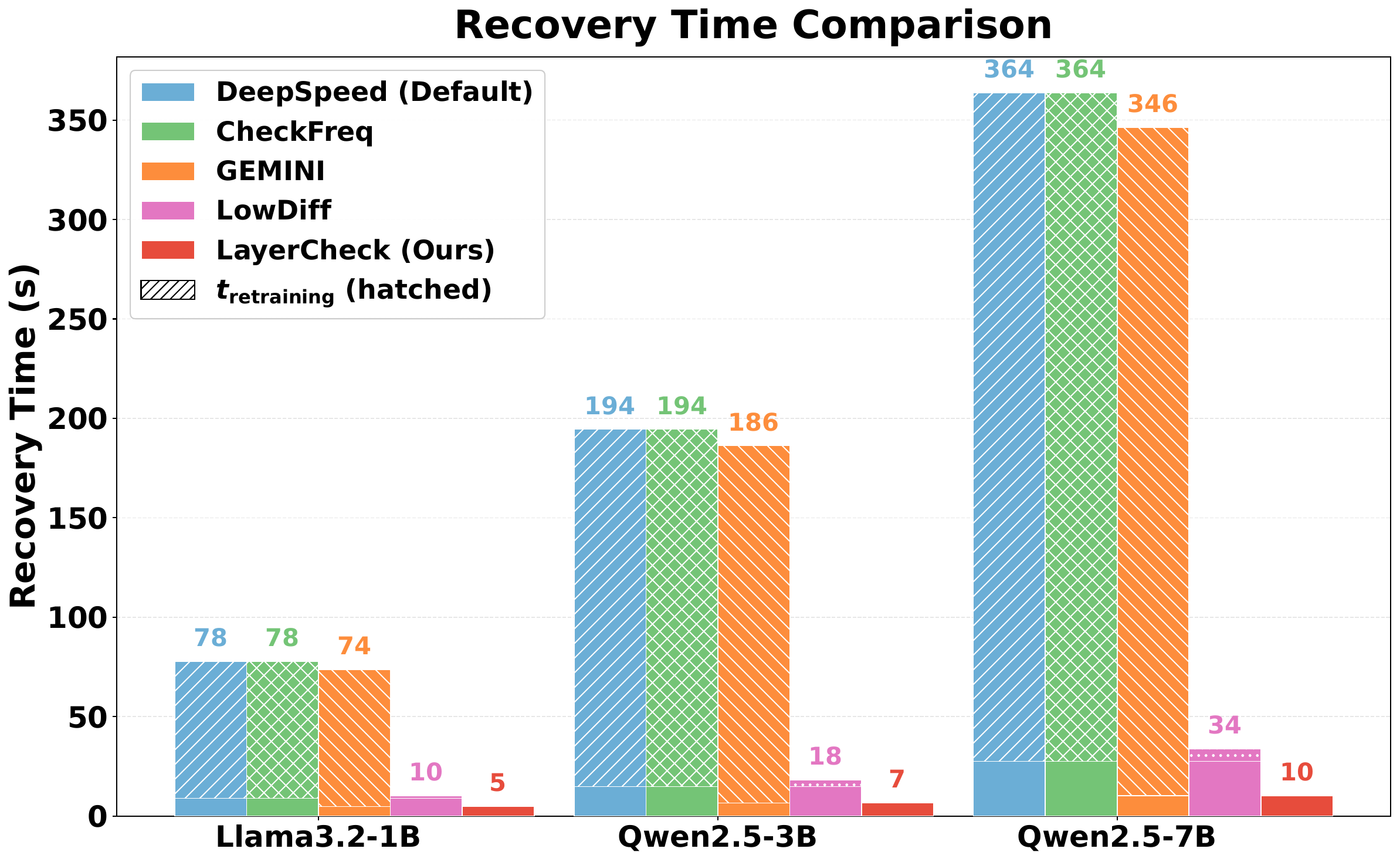}
    \caption{Aligned recovery overhead of a single failure under matched restart freshness across three models. We report $t_{\mathrm{load}} + t_{\mathrm{retrain}}$ as $t_{\mathrm{recovery}}^{\star}$.}
    \label{fig:decomp}
\end{figure}

\subsection{Scalability under Repeated Failures}
\label{subsec:eval-scalability}

As training scales out, failures become more frequent~\cite{wangGEMINIFastFailure2023}, and steady-state checkpoint overhead alone no longer predicts overall system efficiency.
We therefore study projected behavior under repeated failures using a data-driven simulation parameterized by measured system components in the previous sections, with mean time between failures (MTBF) from 0.01 to 3 hours. All method-specific time components used in the simulation (e.g., per-iteration training time, checkpoint overhead, and recovery time terms in \eqref{eq:recovery-time-def}) are measured on Qwen2.5-7B using the same OpenThoughts dataset under the same storage stack, and then plugged into the MTBF-driven failure injection model.

Following GEMINI~\cite{wangGEMINIFastFailure2023}, we report effective training time ratio (ETTR), defined as
\begin{equation}
\label{eq:ettr}
\mathrm{ETTR} \triangleq \frac{t_{\mathrm{train}}}{t_{\mathrm{train}} + t_{\mathrm{ckpt}} + t_{\mathrm{recovery}}}.
\end{equation}
ETTR can be read as utilization: the fraction of wall-clock time spent doing useful forward/backward/optimizer work rather than checkpointing and recovery.

The simulation in Figure~\ref{fig:scaling} shows that, \texttt{LayerCheck} consistently achieves the highest ETTR across all MTBF values.
For example, at MTBF=0.5 hours, \texttt{LayerCheck} attains \textbf{92.3\%}, outperforming LowDiff (88.8\%) and GEMINI (84.5\%).
This result ties together the earlier sections in a single metric: when failures are frequent enough to matter, the system that minimizes both steady-state I/O stalls and failure-time recovery costs yields the highest effective throughput.

\begin{figure}[ht!]
    \centering
    \includegraphics[width=0.8\linewidth]{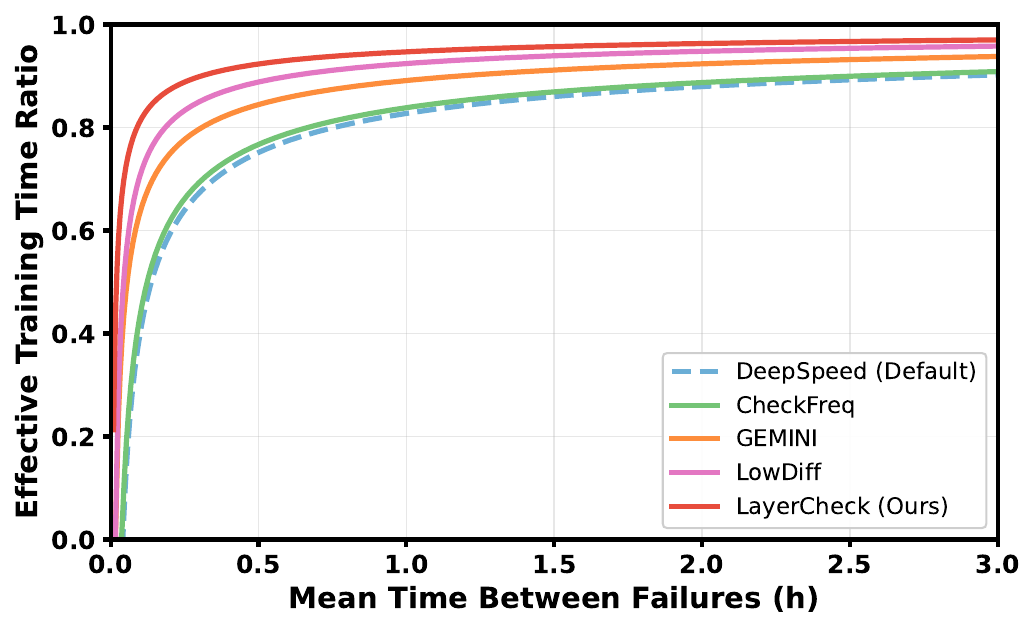}
    \caption{Effective training time ratio under repeated failures with varying MTBF. Results are computed from Eq.~\ref{eq:ettr} with timing parameters measured on Qwen2.5-7B (OpenThoughts).}
    \label{fig:scaling}
\end{figure}

\subsection{Recovered Model Quality}
\label{subsec:eval-quality}

Loss trajectories establish that training remains stable after composite recovery, but they do not fully answer whether the final model behaves the same on downstream evaluation tasks.
This matters because LLM fine-tuning losses can correlate imperfectly with benchmark scores.

Table~\ref{tab:benchmark_all_merged} therefore compares final benchmark scores for (i) failure-free fine-tuning and (ii) fine-tuning with an injected failure and recovery followed by continued training.
Across datasets and model sizes, recovered runs match the failure-free results closely and do not exhibit systematic degradation.
Combined with Figure~\ref{fig:loss}, these results support the paper's main claim: \texttt{LayerCheck} reduces checkpoint and recovery cost while preserving the end-to-end outcome of training.

\begin{table*}[t]
\centering
\caption{Benchmark on OpenThoughts and MedQA (After fine-tuning vs. recovered checkpoints).}
\label{tab:benchmark_all_merged}
\footnotesize
\resizebox{\linewidth}{!}{
\begin{tabular}{lllrrrrrrr}
\toprule
Dataset & Model & Status & ARC-easy & HellaSwag & Lambada & PIQA & MedMCQA & MMLU-Med & PubMedQA \\
\midrule
\multirow{6}{*}{MedQA}
 & \multirow{2}{*}{Qwen2.5-7B}  & After FT (No failure)           & 81.44 & 60.59 & 67.07 & 79.43 & 60.39 & 92.00 & 75.20 \\
 &                             & \textbf{Recovered (step=888)}      & \textbf{81.78} & 60.36 & \textbf{67.30} & \textbf{80.03} & \textbf{60.41} & \textbf{93.00} & \textbf{75.60} \\
\cmidrule(lr){2-10}
 & \multirow{2}{*}{Qwen2.5-3B}  & After FT (No failure)           & 77.65 & 54.82 & 63.50 & 77.20 & 52.12 & 82.00 & 74.60 \\
 &                             & \textbf{Recovered (step=867)}      & \textbf{77.95} & \textbf{54.90} & \textbf{63.56} & \textbf{77.31} & \textbf{52.26} & 81.00 & 74.60 \\
\cmidrule(lr){2-10}
 & \multirow{2}{*}{Llama3.2-1B} & After FT (No failure)           & 64.02 & 46.73 & 53.64 & 73.12 & 31.77 & 33.00 & 55.80 \\
 &                             & \textbf{Recovered (step=1374)}     & \textbf{64.14} & \textbf{46.91} & 53.60 & 72.96 & \textbf{32.44} & \textbf{34.00} & 55.80 \\
\midrule
\multirow{6}{*}{OpenThoughts}
 & \multirow{2}{*}{Qwen2.5-7B}  & After FT (No failure)           & 79.92 & 59.70 & 68.87 & 77.97 & 59.86 & 85.00 & 75.20 \\
 &                             & \textbf{Recovered (step=299)}      & \textbf{80.35} & 59.63 & 68.37 & \textbf{78.24} & 59.74 & 85.00 & 75.00 \\
\cmidrule(lr){2-10}
 & \multirow{2}{*}{Qwen2.5-3B}  & After FT (No failure)           & 77.31 & 54.46 & 66.45 & 78.07 & 51.78 & 74.00 & 73.00 \\
 &                             & \textbf{Recovered (step=775)}      & 77.15 & 54.38 & \textbf{66.56} & \textbf{78.18} & 51.66 & 74.00 & 72.60 \\
\cmidrule(lr){2-10}
 & \multirow{2}{*}{Llama3.2-1B} & After FT (No failure)           & 64.77 & 47.39 & 60.22 & 74.54 & 31.44 & 28.00 & 58.00 \\
 &                             & \textbf{Recovered (step=1254)}     & 64.69 & 47.39 & \textbf{60.43} & 74.54 & \textbf{31.46} & 28.00 & 58.00 \\
\bottomrule
\end{tabular}
}
\end{table*}

\subsection{Summary and Limitations}
\label{subsec:eval-summary}

Taken together, these experiments support a consistent picture. Composite recovery does not destabilize optimization, which is the key stability concern raised by mixed timestamps.
Selective persistence substantially reduces written bytes and, in an I/O-stressful regime, translates to end-to-end time savings. Under an aligned staleness budget, \texttt{LayerCheck} also reduces recovery overhead, and these benefits compound under repeated failures as reflected in ETTR.

Two practical limitations remain.
First, the absolute magnitude of improvements depends on the storage backend~\cite{Qosflow}.
Faster local storage reduces checkpoint cost for all methods, but selective persistence still reduces bytes written and thus reduces the sensitivity of training throughput to storage variability~\cite{dataflow,carat}.

Second, we exclude method-specific reconstruction time $t_{\mathrm{reconstruct}}$ from the cross-method comparison. This term is primarily implementation-dependent and is not reported uniformly across prior systems; moreover, several baselines do not provide complete public recovery code. We therefore compare only the recovery components that can be aligned consistently across methods. \texttt{LayerCheck} nevertheless includes an explicit reconstruction path together with the $S_{\max}$ fidelity guard, making its recovery semantics explicit.

Finally, our implementation, evaluation, and analysis focus on AdamW. LayerCheck’s persistence decision is based on realized layer-wise parameter changes rather than Adam-specific statistics, so the same principle can be applied conceptually to vanilla SGD. In this case, LayerCheck would persist the selected layer weights without per-parameter optimizer state. Because the optimizer affects the scale and temporal pattern of parameter updates, the base threshold \(K_0\) should be calibrated accordingly. A smaller \(K_0\) triggers more frequent persistence, reducing layer staleness and improving recovery fidelity at the cost of higher I/O overhead, whereas a larger \(K_0\) reduces checkpoint writes but may increase staleness and recovery deviation.

%% file: related_work.tex
\section{Related Work}\label{sec:related}

\textbf{Checkpointing for Deep Learning and LLM Training.}
Checkpointing is the standard fault-tolerance mechanism in major deep learning frameworks such as PyTorch~\cite{paszkePyTorchImperativeStyle2019} and TensorFlow~\cite{abadiTensorFlowSystemLargescale2016}, but it exposes the familiar trade-off between checkpoint overhead and recovery cost. Prior work can be grouped into several lines. \emph{Pipeline and concurrent checkpointing} reduce training stalls by overlapping snapshotting, and persistence with computation, as in CheckFreq~\cite{checkfreq}, GEMINI~\cite{wangGEMINIFastFailure2023}, PCcheck~\cite{PCcheck}, Datastate-LLM~\cite{Datastate-LLM}, and FlowCheck~\cite{flowchcek}. \emph{Elastic or redundancy-aware recovery} improves resilience under failures or changing resources, as in CPR~\cite{maengCPRUnderstandingImproving}, Just-in-Time checkpointing~\cite{guptaJustInTimeCheckpointingLow2024}, Elastor~\cite{Elastor}, and Checkmate~\cite{Checkmate}. 
\emph{Partial, incremental, and differential checkpointing} reduce what must be written or reconstructed, including SCAR~\cite{partialrecover}, Check-N-Run~\cite{check-n-run}, LowDiff~\cite{lowdiff}, LLMTailor~\cite{tailor} and Amber~\cite{Amber}. Among these systems, LowDiff~\cite{lowdiff} is a recent online differential-checkpointing baseline and is therefore our primary comparison on the online-I/O axis. LLMTailor is closest in spirit: it assembles a resumable checkpoint from existing ones via fixed, rule-based layer-merging recipes, with no online layer selection or convergence analysis—its filtered variants can even degrade model quality.

In contrast, LayerCheck makes online, per-layer, per-step persistence decisions via accumulated drift with LR-aware thresholding, provides a bounded-staleness convergence guarantee for mixed-timestamp recovery, and avoids replay-based recovery while still reducing checkpoint I/O.

\textbf{Checkpoint Compression and Model Reduction.}
Some prior work reduces checkpoint size through compression. ExCP~\cite{liexcp} combines quantization and compression to shrink checkpoints, while Inshrinkerator~\cite{inshrinkerator} exploits weight sensitivity to compress checkpointed states. QD-Compressor~\cite{QD} further applies layer-wise adaptive quantization to model snapshots in federated learning. AutoCheck~\cite{fu2024autocheck} and ADVICE~\cite{huang2026advice} propose an LLVM-based compiler augmentation to identify variables necessary to checkpointing that significantly reduces checkpoint states. ZOCheck~\cite{sun2026zocheck} designs a
lightweight checkpointing inspired by LLM Zeroth-order (ZO) execution semantics, representing training progress with compressed states and recovers via deterministic replay. 
These methods reduce storage footprint, but they are primarily designed for local checkpoints or smaller federated settings rather than online checkpointing in large-scale distributed LLM training.

More broadly, model reduction techniques such as quantization, pruning, and structural compression have been used to lower the storage and compute cost of large models. Recent quantization studies~\cite{QQQ,qserve} show that low-bit representations can preserve a favorable accuracy--efficiency trade-off at inference time. Structural approaches including LLM-Streamline~\cite{streamlining}, LLM-Pruner~\cite{Llm-pruner}, LaCo~\cite{yangLaCoLargeLanguage2024}, and ShortGPT~\cite{menShortGPTLayersLarge2024} reduce model size by pruning, or merging redundant layers, building on the broader pruning literature~\cite{hanLearningBothWeights2015}. Although these methods are not checkpointing systems, they suggest that LLM layers differ in redundancy and sensitivity. \texttt{LayerCheck} is motivated by a related observation, but applies it to fault tolerance: instead of compressing the deployed model itself, it selectively persists full layer states online during training to reduce checkpoint I/O and recovery cost.

%% file: conclusion.tex
\section{Conclusion}\label{sec:conclude}
In this paper, we introduce \texttt{LayerCheck}, an adaptive layer-wise system for reducing checkpoint overhead in LLM post-training. By selectively persisting only layers with significant accumulated drift, \texttt{LayerCheck} reduces checkpoint I/O and shortens end-to-end training time while preserving final model quality. By reusing already-available training signals, \texttt{LayerCheck}'s layer-drift tracking adds negligible per-step overhead, while selective persistence substantially reduces the checkpoint cost. Experiments show that our approach reduces the total checkpoint size by up to 22.6× and the checkpointing time by up to 1.31× compared with existing SOTA, while preserving final model quality. A key next step is to integrate \texttt{LayerCheck} with overlap-oriented checkpoint pipelines, so that our reduced checkpoint work can also benefit from end-to-end overlap across computation, communication, and I/O. 